\newcommand{\subparagraph}{}
\def\BState{\State\hskip-\ALG@thistlm}
\newtheorem{define}{Definition}
\newtheorem{theorem}{Theorem}
\newtheorem{lemma}{Lemma}
\newtheorem{cor}{Corollary}
\newtheorem{remark}{Remark}
\newcommand{\Bc}{\mathcal{B}}
\newcommand{\Rc}{\mathcal R}
\newcommand{\R}{\mathbb R}
\newcommand{\N}{\mathcal{N}}
\newcommand{\D}{\mathcal{D}}
\newcommand{\V}{\mathcal{V}}
\newcommand{\G}{\mathcal{G}}
\newcommand{\E}{\mathcal{E}}
\newcommand{\A}{\mathcal{A}}
\newcommand{\Z}{\mathbb{Z}}
\newcommand{\Pc}{\mathcal{P}}
\newcommand{\Tc}{\mathcal{T}}
\newcommand{\bmx}[1]{\begin{bmatrix}#1\end{bmatrix}} 
\newcommand{\pth}[1]{\left(#1\right)} 
\newcommand{\nrm}[1]{\left \lVert#1\right \rVert} 
\newcommand{\bmxs}[1]{\begin{bsmallmatrix}#1\end{bsmallmatrix}}
\newcommand{\pmxs}[1]{\begin{psmallmatrix}#1\end{psmallmatrix}}
\newcommand{\gleq}{\preceq} 
\DeclarePairedDelimiter{\ceil}{\lceil}{\rceil}
\DeclarePairedDelimiter{\floor}{\lfloor}{\rfloor}
\DeclarePairedDelimiter{\abs}{\lvert}{\rvert}
\newcommand{\rarr}{\rightarrow} 
\newcommand{\larr}{\leftarrow} 
\newcommand{\Erdosrenyi}{Erd\H{o}s-R\'enyi } 
\let\oldceil\ceil
\def\ceil{\@ifstar{\oldceil}{\oldceil*}}
\let\oldfloor\floor
\def\floor{\@ifstar{\oldfloor}{\oldfloor*}}
\let\oldnorm\norm
\def\norm{\@ifstar{\oldnorm}{\oldnorm*}}
\let\oldabs\abs
\def\abs{\@ifstar{\oldabs}{\oldabs*}}
\theoremstyle{plain}
\newtheorem{problem}{Problem}
\newcommand{\rmax}{r_{\max}}
\newcommand{\bsig}{\bm \sigma}
\renewcommand{\ALG@beginalgorithmic}{\footnotesize}
\def\figsc{1}
\titlespacing\section{0pt}{11pt plus 4pt minus 2pt}{2pt plus 2pt minus 2pt}
\begin{document}

\title{\LARGE \bf Determining r-Robustness of Digraphs Using Mixed Integer Linear Programming}


\author{James Usevitch and Dimitra Panagou
\thanks{The authors are with the Department of Aerospace Engineering, University of Michigan, Ann Arbor; \texttt{usevitch@umich.edu}, \texttt{dpanagou@umich.edu}.}
\thanks{The authors would like to acknowledge the support of the Automotive Research Center (ARC) in accordance with Cooperative Agreement W56HZV-14-2-0001 U.S. Army TARDEC in Warren, MI, and the Award No W911NF-17-1-0526.}
}

\maketitle
\thispagestyle{empty}
\pagestyle{empty}

\acrodef{wrt}[w.r.t.]{with respect to}
\acrodef{apf}[APF]{Artificial Potential Fields}
\begin{abstract}
Convergence guarantees of many resilient consensus algorithms are based on the graph theoretic properties of $r$- and $(r,s)$-robustness. These algorithms guarantee consensus of normally behaving agents in the presence of a bounded number of arbitrarily misbehaving agents if the values of the integers $r$ and $s$ are sufficiently high. However, determining the largest integer $r$ for which an arbitrary digraph is $r$-robust is highly nontrivial. This paper introduces a novel method for calculating this value using mixed integer linear programming. The method only requires knowledge of the graph Laplacian matrix, and can be formulated with affine objective and constraints, except for the integer constraint. Integer programming methods such as branch-and-bound can allow both lower and upper bounds on $r$ to be iteratively tightened. Simulations suggest the proposed method demonstrates greater efficiency than prior algorithms.

\end{abstract}

\IEEEpeerreviewmaketitle

\section{Introduction}
\label{intro}

Consensus on shared information is fundamental to the operation of multi-agent systems. In context of mobile agents, it enables formation control, agent rendezvous, sensor fusion, and many more objectives. Although a vast literature of algorithms for consensus exist, many are unable to tolerate the presence of adversarial attacks or faults. 
Recent years have seen an increase of attention on \emph{resilient} algorithms that are able to operate despite such misbehavior. 
Many of these algorithms have been inspired by work such as \cite{Lamport1982}, which is one of the seminal papers on consensus in the presence of adversaries; \cite{leblanc2013resilient,Zhang2012robustness,LeBlanc_2013_Res_Continuous} which outline discrete- and continuous-time algorithms along with necessary and sufficient conditions for scalar consensus in the presence of Byzantine adversaries; and \cite{Vaidya2012iterative,Vaidya2013byzantine,Tseng2013iterative,Tseng2014asynchronous}, which outline algorithms for multi-agent vector consensus of asynchronous systems in the presence of Byzantine adversaries. Some of the most recent results that draw upon these works include resilient state estimation \cite{Mitra2018secure}, resilient rendezvous of mobile agents \cite{Park2017fault,park2016efficient}, resilient output synchronization \cite{leblanc2017resilient}, resilient simultaneous arrival of interceptors \cite{li2018robust}, resilient distributed optimization \cite{sundaram2018distributed,su2016fault}, reliable broadcast \cite{tseng2015broadcast,Zhang2012robustness}, and resilient multi-hop communication \cite{su2017reaching}.


Many of these results are based upon the graph theoretical properties known as $r$-robustness and $(r,s)$-robustness \cite{leblanc2013resilient,Zhang2012robustness}. These notions were defined after it was shown that traditional graph theoretic metrics (e.g. connectivity) were insufficient to analyze the convergence properties of certain resilient algorithms based on purely local information \cite{Zhang2012robustness}. The properties of $r$- and $(r,s)$-robustness 
have been used in sufficient conditions for several resilient consensus algorithms including the ARC-P \cite{LeBlanc_2013_Res_Continuous}, W-MSR \cite{leblanc2013resilient}, SW-MSR \cite{saldana2017resilient}, and DP-MSR \cite{dibaji2017resilient} algorithms. Given an upper bound on the global or local number of adversaries in the network, these resilient algorithms guarantee convergence of normally behaving agents' states to a value within the convex hull of initial states if the integers $r$ and $s$ are sufficiently large.

A key challenge in implementing these resilient algorithms is that determining the $r$- and $(r,s)$-robustness of arbitrary digraphs is an NP-hard problem in general \cite{leblanc2013algorithms,zhang2015notion}. 
The first algorithmic analysis of determining the values of $r$ and $s$ for arbitrary digraphs was given in \cite{leblanc2013algorithms}. The algorithms in this work employ an exhaustive search to determine the maximum values of $r$ and $s$ for a given digraph, 
and have exponential complexity w.r.t. the number of nodes in the network. 
Subsequent work has focused on methods to circumvent this difficulty, including
graph construction methods which increase the graph size while preserving initial values of $r$ and $s$ \cite{leblanc2013resilient,Guerrero2016formations},
demonstrating the behavior of $r$ as a function of particular graph properties 
\cite{zhang2015notion,shahrivar2017spectral,zhao2017connectivity},
lower bounding $r$ with the isoperimetric constant and algebraic connectivity of undirected graphs \cite{shahrivar2015robustness},
and even using
machine learning to correlate characteristics of
certain graphs to the values of $r$ and $s$ \cite{wang2018using}. Finding more efficient ways of determining the \emph{exact} robustness of digraphs however is still an open problem.

In this paper, we introduce a novel method for determining the maximum value of $r$ for which an arbitrary digraph is $r$-robust by solving a mixed integer linear programming (MILP) problem. The problem only requires knowledge of the graph Laplacian matrix and can be formulated with affine objective and constraints, with the exception of the integer constraint.
To the best of our knowledge, this is the first time the problem has been formulated in this way. This contribution provides several advantages. First, these results open the door for the extensive literature on integer programming to be applied to the $r$-robustness determination problem. In particular, applying branch-and-bound algorithms to the problem can allow for lower and upper bounds on a digraph's $r$-robustness to be iteratively tightened. Prior algorithms are only able to tighten the upper bound on the maximum robustness for a given digraph.
Second, this formulation enables commercially available solvers such as Gurobi or MATLAB's \emph{intlinprog} to be used to find the maximum robustness of any digraph. Finally, experimental results using this new formulation suggest a reduction in computation time as compared to the centralized algorithm proposed in \cite{leblanc2013algorithms}.

This paper is organized as follows: notation and relevant definitions are introduced in Section \ref{sec:notation}. The problem formulation is given in Section \ref{sec:problemformulation}. Our main result of formulating the $r$-robustness determination problem as a mixed integer linear programming problem is given in Section \ref{sec:rrobustdeterm}. Simulations are presented in Section \ref{sec:simulations}, and we present conclusions and directions for future work in Section \ref{sec:conclusion}.

\section{Notation}
\label{sec:notation}

The real numbers and integers are denoted $\R$ and $\Z$, respectively. The  nonnegative real numbers and integers are denoted $\R_+$ and $\Z_+$, respectively. $\R^n$ denotes an $n$-dimensional vector space over the field $\R$, $\Z^n$ represents an $n$ dimensional vector with nonnegative integer vectors, and $\{0,1\}^n$ represents a binary vector of length $n$. Scalars are denoted in normal text (e.g. $x \in \R$) while vectors are denoted in bold (e.g. $\bm x \in \R^n$).
The notation $x_i$ denotes the $i$th entry of vector $\bm x$. 
The inequality symbol $\gleq$ denotes a componentwise inequality between vectors; i.e. for $\bm x,\bm y \in \R^n$, $\bm x \gleq \bm y \implies x_i \leq  y_i\ \forall i \in \{1,\ldots,n\}$. An $n$-dimensional vector of ones is denoted $\bm 1_n$, and an $n$-dimensional vector of zeros is denoted $\bm 0_n$. In both cases the subscript $n$ will be omitted when the size of the vector is clear from the context. The union, intersection, and set complement operations are denoted by $\cup,\ \cap$, and $\setminus$, respectively. 
The cardinality of a set is denoted as $|S|$, and the empty set is denoted $\{ \emptyset \}$. The infinity norm on $\R^n$ is denoted $\nrm{\cdot}_\infty$. The notation $C(n,k) = n!/(k!(n-k)!)$ denotes the binomial coefficient with $n,k \in \Z_+$. Given a set $S$, the power set of $S$ is denoted $\Pc(S) = \{A : A \subseteq S \}$.

A directed graph (digraph) is denoted as $\D = (\V,\E)$, where $\V = \{1,\ldots,n\}$ is the set of indexed nodes and $\E$ is the edge set. 
A directed edge is denoted $(i,j)$, with $i,j \in \V$. The set of in-neighbors for an agent $j$ is denoted $\N_j = \{i \in \V : (i,j) \in \E \}$. The minimum in-degree of a digraph $\D$ is denoted $\delta^{in}(\D) = \min_{j \in \V} |\N_j|$.
In this paper we consider \emph{simple} digraphs of $n$ nodes, meaning digraphs without self loops $\big((i,i) \notin \E\ \forall i \in \V \big)$ and without redundant edges (i.e. at most one directed edge $(i,j) \in \E$ exists from $i$ to $j$).
Occasionally, $\G = (\V,\E)$ will be used to denote an undirected graph where $(i,j) \in \E \iff (j,i) \in \E$ $\forall i,j \in \V$.
The graph Laplacian $L$ for a digraph (or undirected graph) is defined as follows, with $L_{j,i}$ denoting the entry in the $j$th row and $i$th column:

\begin{equation}
\label{eq:Laplacian}
    L_{j,i} = \begin{cases}
        |\N_j| & \text{if } j = i \\
        -1 & \text{if } i \in \N_j \\
        0 & \text{if } i \notin \N_j
    \end{cases}
\end{equation}

\section{Problem Formulation}
\label{sec:problemformulation}

We begin with the definitions of $r$-reachability and $r$-robustness:

\begin{define}[\cite{leblanc2013resilient}]
Let $r \in \Z_+$ and $\D=(\V,\E)$ be a digraph. A nonempty subset $S \subset \V$ is $r$-reachable if $\exists i \in S$ such that $|\N_i \backslash S| \geq r$.
\end{define}


\begin{define}[\cite{leblanc2013resilient}]
\label{def:rrobust}
Let $r \in \Z_+$. A nonempty, nontrivial digraph $\D = (\V,\E)$ on $n$ nodes $(n \geq 2)$ is $r$-robust if for every pair of nonempty, disjoint subsets of $\V$, at least one of the subsets is $r$-reachable. By convention, the empty  graph $(n = 0)$ is 0-robust and the trivial graph $(n=1)$ is 1-robust.
\end{define}
If a set $S$ is $r$-reachable, it is $r'$-reachable for any $0 \leq r' \leq r$.
Similarly, if a graph is $r$-robust it is also $r'$-robust for any $0 \leq r' \leq r$. 

This paper addresses the following problem:
\begin{problem}
\label{prob:robust}
    Given an arbitrary digraph $\D$, determine the maximum integer $r$ for which $\D$ is $r$-robust.
\end{problem}

\begin{remark}
\label{rmk:rho}
    We denote the maximum integer $r$ for which a given digraph $\D$ is $r$-robust as $r_{\max}(\D) \in \Z_+$.
\end{remark}
It should be clear from Definition \ref{def:rrobust} that determining $\rmax(\D)$ involves checking the reachability of pairs of nonempty, disjoint subsets in a graph. Let the set $\mathcal{T} \subset \Pc(\V) \times \Pc(\V)$ be defined as 
\begin{align}
        \mathcal{T} =  \big\{ (S_1,S_2) \in \Pc(\V) \times \Pc(\V) : |S_1| > 0,\ |S_2| > 0,\ & \nonumber \\
        |S_1 \cap S_2| = 0 \big\}& \label{eq:Tdef}
\end{align}
The set $\mathcal{T}$ therefore contains all possible pairs of nonempty, disjoint subsets of $\V$. It was shown in \cite{leblanc2013algorithms} that $|\mathcal{T}| = \sum_{p=2}^n \pmxs{n \\ p} (2^{p} -2)$.\footnote{Since $(S_1,S_2) \in \mathcal{T} \implies (S_2,S_1) \in \mathcal{T}$, the total number of \emph{unique} nonempty, disjoint subsets is $(1/2)|\mathcal{T}|$, denoted as $R(n)$ in \cite{leblanc2013algorithms}.}

\subsection{Alternate Formulation of Maximum $r$-Robustness}

In our first result, we derive an equivalent way of expressing the maximum robustness $r_{\max}(\D)$ of a digraph $\D$.
Given an arbitrary digraph $\D=(\V,\E)$ and a subset $S \subset V$, we define the reachability function $\Rc : \Pc(\V) \rarr \Z_+$ as follows:
\begin{align}
\label{eq:reachmax}
    \Rc(S) &= \begin{cases}
    \max_{i \in S } |\N_i \backslash S | & \text{if } S \neq \{\emptyset\} \\
    0 & \text{if } S = \{\emptyset\}
    \end{cases}
\end{align}
In other words, $\Rc(S)$ returns the maximum integer $r$ for which the set $S$ is $r$-reachable. 
The following Lemma presents an explicit formulation which yields $r_{\max}(\D)$:

\begin{lemma}
\label{lem:rrobalt}
    Let $\D = (\V,\E)$ be an arbitrary nonempty, nontrivial, simple digraph with $|\V| = n$. The following holds:
    \begin{equation}
    \label{eq:rrobalt}
        \begin{aligned}
            r_{\max}(\D) =& \underset{S_1,S_2 \in \Pc(\V)}{\text{minimize}}
            & & \max\pth{\Rc(S_1) , \Rc(S_2)} \\
            & \text{subject to}
            & & |S_1| > 0,\ |S_2| > 0,\ |S_1 \cap S_2| = 0 \\
        \end{aligned}
    \end{equation}
\end{lemma}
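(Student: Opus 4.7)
The plan is to argue that the claimed optimization value coincides with $r_{\max}(\D)$ by a straightforward chain of equivalences rooted in the definitions, with the reachability function $\Rc$ serving as the bridge between the combinatorial property of $r$-reachability and a numerical objective.

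First, I would observe the basic equivalence that a nonempty set $S \subseteq \V$ is $r$-reachable if and only if $\Rc(S) \geq r$. This is immediate from \eqref{eq:reachmax}, since $\Rc(S) = \max_{i \in S} |\N_i \setminus S|$ is by definition the largest $r$ for which some $i \in S$ satisfies $|\N_i \setminus S| \geq r$. As an immediate consequence, for any pair $(S_1,S_2) \in \mathcal{T}$, at least one of $S_1,S_2$ is $r$-reachable if and only if $\max\bigl(\Rc(S_1),\Rc(S_2)\bigr) \geq r$.

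Next, I would unfold Definition \ref{def:rrobust}: $\D$ is $r$-robust iff \emph{every} pair $(S_1,S_2) \in \mathcal{T}$ has at least one $r$-reachable component. Combining with the previous equivalence, this is exactly the statement
\begin{equation*}
    \min_{(S_1,S_2)\in\mathcal{T}} \max\bigl(\Rc(S_1),\Rc(S_2)\bigr) \;\geq\; r.
\end{equation*}
Let $m^\star$ denote the optimal value of the right-hand side of \eqref{eq:rrobalt}; note $m^\star \in \Z_+$ because $\Rc$ takes values in $\Z_+$ and $\mathcal{T}$ is finite (so the minimum is attained). Taking $r = m^\star$ shows $\D$ is $m^\star$-robust, hence $r_{\max}(\D) \geq m^\star$. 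Conversely, any optimal minimizing pair $(S_1^\star,S_2^\star)$ certifies that $\D$ is not $(m^\star+1)$-robust, since $\max\bigl(\Rc(S_1^\star),\Rc(S_2^\star)\bigr) = m^\star < m^\star + 1$, giving $r_{\max}(\D) \leq m^\star$. The two inequalities combine to yield $r_{\max}(\D) = m^\star$, which is \eqref{eq:rrobalt}.

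I do not expect any substantive obstacle; the proof is essentially a translation of the definitions of $r$-robustness and $r$-reachability into the language of an optimization problem. The only points requiring minor care are (i) verifying that the constraint set in \eqref{eq:rrobalt} matches $\mathcal{T}$ in \eqref{eq:Tdef} exactly, so the universal quantifier in the definition of $r$-robustness is faithfully represented by the minimization, and (ii) the integrality of $\Rc$ values, which justifies identifying the largest integer $r$ satisfying the inequality with the integer minimum itself rather than a strict supremum.
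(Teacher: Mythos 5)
Your proposal is correct and follows essentially the same route as the paper's proof: both establish $r_{\max}(\D) \geq m^\star$ by noting that every pair in $\mathcal{T}$ has $\max\bigl(\Rc(S_1),\Rc(S_2)\bigr) \geq m^\star$, and both establish $r_{\max}(\D) \leq m^\star$ by using the minimizing pair as a witness that no higher robustness is possible (the paper phrases this as a contradiction, you as a direct contrapositive, which is only a cosmetic difference).
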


\begin{proof}
    For brevity, define the function
    \begin{equation*}
        g(S_1,S_2) = \max\pth{\Rc(S_1) , \Rc(S_2)},\ g: \Pc(\V) \times \Pc(\V) \rarr \Z_+
    \end{equation*}
    Note that $g(S_1,S_2) = m \implies \Rc(S_1) = m$ or $\Rc(S_2) = m$. Let $(S_1^*, S_2^*)$ be a minimizer of the right hand side (RHS) of \eqref{eq:rrobalt}. Then $g(S_1^*,S_2^*) \leq g(S_1,S_2)\ \forall (S_1,S_2) \in \Tc$. Therefore $\forall (S_1,S_2) \in \Tc,$ either $\Rc(S_1) \geq g(S_1^*,S_2^*)$ or $\Rc(S_2) \geq g(S_1^*,S_2^*)$. Therefore the graph is $g(S_1^*,S_2^*)$-robust, implying $r_{\max}(\D) \geq g(S_1^*,S_2^*)$.
    
    To show $r_{\max}(\D) = g(S_1^*,S_2^*)$, we prove by contradiction. By definition, $\D$ is $r_{\max}(\D)$-robust (Remark \ref{rmk:rho}). Suppose $r_{\max}(\D) > g(S_1^*,S_2^*)$. This implies $\Rc(S_1^*) < r_{\max}(\D)$ and $\Rc(S_2^*) < r_{\max}(\D)$. Since $\exists (S_1^*,S_2^*) \in \Tc$ such that $\Rc(S_1^*) < r_{\max}(\D)$ and $\Rc(S_2^*) < r_{\max}(\D)$, this implies $\D$ is not $r_{\max}(\D)$-robust. This contradicts the fact that $\D$ is $r_{\max}(\D)$-robust by definition. Therefore, $r_{\max}(\D) = g(S_1^*,S_2^*)$.
\end{proof}

\begin{remark}
    \label{rmk:implicit}
    Using the definition of $\mathcal{T}$ in \eqref{eq:Tdef}, the constraints on the RHS of \eqref{eq:rrobalt} can be made implicit \cite[section 4.1.3]{boyd2004convex} as follows:
    \begin{equation}
        \begin{aligned}
            r_{\max}(\D) =& \underset{(S_1,S_2) \in \mathcal{T}}{\text{minimize}}
            & & \max\pth{\Rc(S_1) , \Rc(S_2)} 
        \end{aligned}
    \end{equation}
\end{remark}


\section{$r$-Robustness Determination as an MILP}
\label{sec:rrobustdeterm}


The next step in the analysis is to demonstrate how the expression $\max\pth{\Rc(S_1) , \Rc(S_2)}$
can be calculated as a function of the graph Laplacian matrix.
Recall that $n = |\V|$, and define the indicator vector $\bm \sigma(\cdot) : \Pc(\V) \rarr \{0,1\}^{n}$ as follows: for any $S \in \Pc(\V)$,
\begin{align}
\label{eq:sigmadef}
    \bm \sigma_j(S) &= \begin{cases}
        1 & \text{if } j \in S \\
        0 & \text{if } j \notin S
    \end{cases},\ j = \{1,\ldots,n \}
\end{align}

In other words the $j$th entry of $\bm \sigma(S)$ is 1 if the node with index $j$ is a member of the set $S \in \Pc(\V)$, and zero otherwise. It is straightforward to verify that $\bm \sigma(\cdot)$ is a bijection. Therefore given $\bm x \in \{0,1\}^n$, the set $\bm \sigma^{-1}(\bm x) \in \Pc(\V)$ is defined by $\bm x_j = 1 \implies j \in \bm \sigma^{-1}(\bm x)$ and $\bm x_j = 0 \implies j \notin \bm \sigma^{-1}(\bm x)$.

\begin{lemma}
\label{lem:Ljsig}
    Let $\D= (\V,\E)$ be an arbitrary nonempty, nontrivial, simple digraph, let $L$ be the Laplacian matrix of $\D$, and let $S \in \Pc(\V)$. Then the following holds for all $j \in \{1,\ldots,n\}$:
    \begin{align}
    	\label{eq:Avicii}
        L_j \bm \sigma(S) &= \begin{cases}
            |\N_j \backslash S|, & \text{if } j \in S, \\
            -|\N_j \cap S|, & \text{if } j \notin S,
        \end{cases}
    \end{align}
    where $L_j$ is the $j$th row of $L$. Furthermore,
    \begin{align}
    	\label{eq:Coldplay}
    	\Rc(S) = \max_j L_j \bm \sigma(S),\ j \in \{1,\ldots,n\}.
    \end{align}
\end{lemma}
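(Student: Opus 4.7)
The plan is to prove \eqref{eq:Avicii} by directly expanding the inner product $L_j \bm \sigma(S) = \sum_{i=1}^n L_{j,i}\, \sigma_i(S)$ using the Laplacian definition in \eqref{eq:Laplacian} and the indicator definition in \eqref{eq:sigmadef}, and to prove \eqref{eq:Coldplay} as a direct corollary by inspecting the sign of each row contribution.

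First I would fix $j \in \{1,\ldots,n\}$ and split the sum into the diagonal term $L_{j,j}\sigma_j(S) = |\N_j|\,\sigma_j(S)$ and the off-diagonal terms. Because only indices $i \in \N_j$ contribute a nonzero off-diagonal entry ($L_{j,i} = -1$), the off-diagonal part collapses to $-\sum_{i \in \N_j}\sigma_i(S) = -|\N_j \cap S|$. In the case $j \in S$, the diagonal term contributes $|\N_j|$, and since the graph is simple ($j \notin \N_j$), subtracting $|\N_j \cap S|$ gives $|\N_j| - |\N_j \cap S| = |\N_j \setminus S|$, which is the top branch of \eqref{eq:Avicii}. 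In the case $j \notin S$, the diagonal term vanishes and only the off-diagonal contribution $-|\N_j \cap S|$ remains, yielding the bottom branch.

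For \eqref{eq:Coldplay}, I would exploit the crucial sign pattern made visible by \eqref{eq:Avicii}: rows indexed by $j \in S$ produce a nonnegative value $|\N_j \setminus S| \geq 0$, while rows indexed by $j \notin S$ produce a nonpositive value $-|\N_j \cap S| \leq 0$. Assuming $S \neq \{\emptyset\}$, this immediately implies the maximum over all $j$ is attained on some index $j \in S$ (any $j \in S$ is a feasible witness giving a nonnegative value, so no $j \notin S$ can beat it unless $S \subseteq V$ has the property that every nonnegative value is also zero, in which case the nonpositive indices also attain zero and no harm is done). Hence $\max_j L_j \bm \sigma(S) = \max_{j \in S} |\N_j \setminus S| = \Rc(S)$ by definition of the reachability function in \eqref{eq:reachmax}. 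For the boundary case $S = \{\emptyset\}$, we have $\bm \sigma(S) = \bm 0$, so $L_j \bm \sigma(S) = 0$ for every $j$, matching $\Rc(\{\emptyset\}) = 0$.

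The argument is essentially routine once the sign observation is made, so I do not anticipate a serious obstacle; the only subtlety worth flagging is the need to invoke the \emph{simple} digraph assumption ($j \notin \N_j$) in the case $j \in S$ to ensure that the diagonal term $|\N_j|$ and the off-diagonal adjustment $-|\N_j \cap S|$ combine cleanly into $|\N_j \setminus S|$ without double-counting node $j$ itself.
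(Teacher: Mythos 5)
Your proposal is correct and follows essentially the same route as the paper's own proof: the same expansion of $L_j \bm \sigma(S)$ into the diagonal term $|\N_j|\sigma_j$ and the off-diagonal contribution $-|\N_j \cap S|$, the same case split on $j \in S$ versus $j \notin S$, the same sign observation ($\max_{j \notin S} L_j \bm\sigma(S) \leq 0 \leq \max_{j \in S} L_j \bm\sigma(S)$) to reduce the global maximum to the maximum over $S$, and the same separate treatment of $S = \{\emptyset\}$. Your explicit flagging of the simple-digraph assumption ($j \notin \N_j$) is a nice touch that the paper leaves implicit.
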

\begin{proof}
The term $\bm \sigma(S)$ is shortened to $\bm \sigma$ for brevity. Recall that the entry in the $j$th row and $i$th column of $L$ is denoted $L_{j,i}$. The definition of $L$ from \eqref{eq:Laplacian} implies
    \begin{align}
    	L_j \bm \sigma &= (L_{j,j}) \sigma_j + \sum_{q \in \{1,\ldots,n\} \backslash j } (L_{j,q}) \sigma_q  \nonumber \\
         &= |\N_j| \sigma_j   - \sum_{q \in \N_j \cap S } \sigma_q  - \sum_{q \in \N_j \backslash S } \sigma_q. \label{eq:superman}
    \end{align}
Since by \eqref{eq:sigmadef}, $q \in S$ implies $\sigma_q = 1$, the term $\sum_{q \in \N_j \cap S } \sigma_q = |\N_j \cap S|$. In addition, since $q \notin S$ implies $\sigma_q = 0$, the term $\sum_{q \in \N_j \backslash S } \sigma_q = 0$. By this, equation \eqref{eq:superman} simplifies to $L_j \bm  \sigma  = |\N_j| \sigma_j   - |\N_j \cap S |$.

The value of the term $|\N_j| \sigma_j$ depends on whether $j \in S$ or $j \notin S$. If $j \in S$, then $\sigma_j = 1$, implying $ L_j \bm  \sigma  = |\N_j| - |\N_j \cap S | = \pth{|\N_j \cap S | + |\N_j \backslash S |} -  |\N_j \cap S | = |\N_j \backslash S |$.
If $j \notin S $, then $\sigma_j = 0$ implying $L_j \bm  \sigma  = -|\N_j \cap S |$. This proves the result for equation \eqref{eq:Avicii}.

To prove \eqref{eq:Coldplay}, we first consider nonempty sets $S \in \Pc(\V) \backslash \{\emptyset\}$. By the results above and \eqref{eq:reachmax}, the maximum reachability of any $S \in \Pc(\V) \backslash \{ \emptyset \}$ is found by
\begin{equation}
    \Rc(S) = \max_{j \in  S} |\N_j \backslash S| = \max_{j \in S} (L_j \bm \sigma(S)). \label{eq:juicyfruit}
\end{equation}
By its definition, $\Rc(S) \geq 0$. Observe that
if $j \in S$ then $L_j \bm \sigma(S) = |\N_j \backslash S| \geq 0$, implying $\max_{j \in S} L_j \bm \sigma(S) \geq 0$.
Conversely,
if an agent $j$ is \emph{not} in the set $S$, then the function $L_j \bm \sigma(S)$ takes the nonpositive value $-|\N_j \cap S|$. This implies $\max_{j \notin S} L_j \bm \sigma(S) \leq 0$.
By these arguments, we therefore have $\max_{j \notin S} L_j \bm \sigma(S) \leq 0 \leq \max_{j \in S} L_j \bm \sigma(S)$, which implies
\begin{align}
	\max_{j \in \{1,\ldots,n \}} L_j \bm \sigma(S) &= \max\pth{(\max_{j \in S} L_j \bm \sigma(S)), (\max_{j \notin S} L_j \bm \sigma (S))} \nonumber \\
	&= \max_{j \in S} L_j \bm \sigma(S). \label{eq:gingerale}
\end{align}
Therefore by equations \eqref{eq:gingerale} and \eqref{eq:juicyfruit}, the maximum reachability of $S$ is found by the expression
\begin{equation}
    \label{eq:reachS}
    \Rc(S) = \max_j (L_j \bm \sigma(S)),\ j \in \{1,\ldots,n\}.
\end{equation}
Lastly, if $S = \emptyset$, then by \eqref{eq:reachmax} we have $\Rc(S) = 0$. In addition, $\bm \sigma(S) = \bm 0$, implying that $\max_j L_j \bm \sigma(S) = 0 = \Rc(S),\ j \in \{1,\ldots,n\}$.    
\end{proof}

Lemma \ref{lem:Ljsig} can be used to rewrite the objective function of \eqref{eq:rrobalt} in terms of the Laplacian matrix of $\D$:

\begin{lemma}
\label{lem:maxequal}
Let  $\D = (\V,\E)$ be an arbitrary nonempty, nontrivial, simple digraph. Let $L$ be the Laplacian matrix of $\D$, and let $L_j$ be the $j$th row of $L$. Let $\Tc$ be defined as in \eqref{eq:Tdef}. Then for all $(S_1,S_2) \in \Tc$ the following holds:
\begin{align}
    &\max \pth{\Rc(S_1),\Rc(S_2)} = \nonumber \\
     &\max \pth{ \max_i \pth{L_i \bm \sigma(S_1)}, \max_j \pth{L_j \bm \sigma(S_2)} } \\
     &\hspace{.5em} i,j \in \{1,\ldots,n\}. \nonumber
\end{align}
\end{lemma}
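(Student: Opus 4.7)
The plan is to obtain the claim as an immediate consequence of Lemma \ref{lem:Ljsig}. That lemma already establishes the closed-form identity $\Rc(S) = \max_{j \in \{1,\ldots,n\}} L_j \bm\sigma(S)$ for any $S \in \Pc(\V)$, including both the empty-set case and the general nonempty case. Since the hypothesis $(S_1,S_2) \in \Tc$ guarantees in particular that each of $S_1$ and $S_2$ individually lies in $\Pc(\V)$, Lemma \ref{lem:Ljsig} may be applied to each argument separately, without regard to the disjointness constraint.

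Concretely, I would first invoke Lemma \ref{lem:Ljsig} with $S = S_1$ to write $\Rc(S_1) = \max_i L_i \bm\sigma(S_1)$ for $i \in \{1,\ldots,n\}$, and then invoke it again with $S = S_2$ to write $\Rc(S_2) = \max_j L_j \bm\sigma(S_2)$ for $j \in \{1,\ldots,n\}$. Substituting both identities into the outer $\max(\cdot,\cdot)$ on the left-hand side immediately produces the claimed expression on the right-hand side.

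There is essentially no nontrivial obstacle here; the lemma is a pure restatement needed to prepare the objective of \eqref{eq:rrobalt} for its subsequent reformulation as a mixed integer linear program. The only mild subtlety worth flagging in the write-up is that the two inner maxima range over the full index set $\{1,\ldots,n\}$ rather than over $S_1$ or $S_2$, a point that Lemma \ref{lem:Ljsig} justifies via the observation $\max_{j \notin S} L_j \bm\sigma(S) \leq 0 \leq \max_{j \in S} L_j \bm\sigma(S)$. Because of this, no index bookkeeping is required when the two expressions are combined under the outer maximum, and the proof reduces to a single line of substitution.
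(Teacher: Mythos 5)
Your proposal is correct and follows exactly the paper's own argument: apply Lemma \ref{lem:Ljsig} separately to $S_1$ and $S_2$ to get $\Rc(S_1) = \max_i L_i \bm\sigma(S_1)$ and $\Rc(S_2) = \max_j L_j \bm\sigma(S_2)$, then substitute into the outer maximum. The extra remark about why the inner maxima may range over all of $\{1,\ldots,n\}$ is already handled inside Lemma \ref{lem:Ljsig}, so nothing further is needed.
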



\begin{proof}
By Lemma \ref{lem:Ljsig}, $\Rc(S_1) = \max_i L_i \bm \sigma(S_1)$ for $i \in \{1,\ldots,n\}$, and $\Rc(S_2) = \max_j L_j \bm \sigma(S_2)$ for $j \in \{1,\ldots,n\}$. The result follows.    
\end{proof}
\vspace{-.5em}
From Lemma \ref{lem:rrobalt}, Lemma \ref{lem:maxequal}, and Remark \ref{rmk:implicit},
we can immediately conclude that $\rmax(\D)$ satisfies
\begin{align}
    \label{eq:rrobalt5000}
            &\rmax(\D) = \nonumber \\
            &\underset{(S_1,S_2) \in \Tc}{\min}
             \max \big(\max_i \pth{L_i \bm \sigma(S_1)}, \max_j \pth{L_j \bm \sigma(S_2)} \big).
\end{align}
Note that the terms $\bm \sigma(S_1)$ and $\bm \sigma(S_2)$ are each $n$-dimensional binary vectors. Letting $\bm b^1 = \bm \sigma(S_1)$ and $\bm b^2 = \bm \sigma(S_2)$, the objective function of \eqref{eq:rrobalt5000} can be written as $\max \big(\max_i \pth{L_i \bm b^1}, \max_j \pth{L_j \bm b^2} \big)$. Every pair $(S_1,S_2) \in \Tc$ can be mapped into a pair of binary vectors $(\bm b^1, \bm b^2)$ by the function $\Sigma : \Tc \rarr \{0,1\}^n \times \{0,1\}^n$, where $\Sigma(S_1,S_2) = (\bm \sigma(S_1), \bm \sigma(S_2)) = (\bm b^1, \bm b^2)$.
By determining the image of $\Tc$ under $\Sigma(\cdot,\cdot)$, the optimal value of \eqref{eq:rrobalt5000} can be found by minimizing over pairs of binary vectors $(\bm b_1, \bm b_2) \in \Sigma(\Tc)$ directly. Using binary vector variables instead of set variables $(S_1,S_2)$ will allow \eqref{eq:rrobalt5000} to be written directly in a MILP form.
%
%
Towards this end, the following Lemma defines the set $\Sigma(\Tc)$:

\begin{lemma}
\label{lem:bijec}
    Let $\D = (\V,\E)$ be an arbitrary nonempty, nontrivial, simple digraph, and let $\Tc$ be defined as in \eqref{eq:Tdef}.
    Define the function $\Sigma: \mathcal{T} \rarr \{0,1\}^n \times \{0,1\}^n$ as
    \begin{equation}
    \label{eq:Sigma}
        \Sigma(S_1,S_2) = (\bm \sigma(S_1),\bm \sigma(S_2)),\ (S_1,S_2) \in \Tc.
    \end{equation}
    Define the set $\Bc \subset \{0,1\}^n \times \{0,1\}^n$ as
    \begin{align}
        \mathcal{B} = \bigg\{&  (\bm b^1, \bm b^2) \in \{0,1 \}^n \times \{0,1 \}^n : 1 \leq \bm 1^T \bm b^1 \leq (n-1), \nonumber \\
         &1 \leq \bm 1^T \bm b^2 \leq (n-1),\
         \bm b^1 + \bm b^2 \gleq \bm 1 \bigg\}. \label{eq:Bset}
    \end{align}
    Then both of the following statements hold:
    \begin{enumerate}
    \item The image of $\Tc$ under $\Sigma$ is equal to $\Bc$, i.e. $\Sigma(\Tc) = \Bc$
    \item The mapping $\Sigma : \Tc \rarr \Bc$ is a bijection.
    \end{enumerate}
\end{lemma}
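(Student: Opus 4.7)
The plan is to prove the two claims in sequence, leveraging the fact (already established in the paragraph following \eqref{eq:sigmadef}) that $\bm \sigma : \Pc(\V) \to \{0,1\}^n$ is a bijection. For claim (1) I would prove $\Sigma(\Tc) = \Bc$ via two inclusions. For claim (2), injectivity will follow almost for free from the injectivity of $\bm \sigma$ on each coordinate, and surjectivity onto $\Bc$ is exactly the second inclusion proved in claim (1).

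For $\Sigma(\Tc) \subseteq \Bc$, I would take an arbitrary $(S_1,S_2) \in \Tc$ and verify that $(\bm b^1, \bm b^2) := (\bm \sigma(S_1), \bm \sigma(S_2))$ satisfies the three conditions in \eqref{eq:Bset}. The key observation is that $\bm 1^T \bm \sigma(S) = |S|$ by \eqref{eq:sigmadef}. Nonemptiness of $S_1, S_2$ gives the lower bounds $\bm 1^T \bm b^1 \geq 1$ and $\bm 1^T \bm b^2 \geq 1$. Disjointness together with nonemptiness of $S_2$ forces $S_1 \subsetneq \V$, hence $|S_1| \leq n-1$, and symmetrically $|S_2| \leq n-1$. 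For the componentwise inequality $\bm b^1 + \bm b^2 \gleq \bm 1$, I would argue coordinate by coordinate: if $b^1_j + b^2_j = 2$ at some $j$, then $j \in S_1 \cap S_2$, contradicting $|S_1 \cap S_2| = 0$.

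For $\Bc \subseteq \Sigma(\Tc)$, I would take $(\bm b^1, \bm b^2) \in \Bc$ and define $S_1 := \bm \sigma^{-1}(\bm b^1)$ and $S_2 := \bm \sigma^{-1}(\bm b^2)$ using the inverse of $\bm \sigma$. The bounds $1 \leq \bm 1^T \bm b^k \leq n-1$ translate to $1 \leq |S_k| \leq n-1$, so $S_1$ and $S_2$ are nonempty. The constraint $\bm b^1 + \bm b^2 \gleq \bm 1$ means no index $j$ belongs to both sets, so $|S_1 \cap S_2| = 0$. Hence $(S_1,S_2) \in \Tc$ and $\Sigma(S_1,S_2) = (\bm b^1, \bm b^2)$ by construction.

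Claim (2) then follows quickly. For injectivity, if $\Sigma(S_1,S_2) = \Sigma(S_1',S_2')$, then $\bm \sigma(S_1) = \bm \sigma(S_1')$ and $\bm \sigma(S_2) = \bm \sigma(S_2')$; injectivity of $\bm \sigma$ yields $S_1 = S_1'$ and $S_2 = S_2'$. Surjectivity onto $\Bc$ is precisely the second inclusion established above. I do not anticipate a hard step here; the only place where a little care is needed is making sure the upper bound $|S_k| \leq n-1$ is obtained from disjointness plus nonemptiness of the other set, rather than being taken as an extra hypothesis — this is the one subtle correspondence between the set-theoretic and vector-theoretic formulations.
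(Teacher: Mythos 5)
Your proposal is correct and follows essentially the same route as the paper's proof: both inclusions for $\Sigma(\Tc)=\Bc$ are argued the same way (lower bounds from nonemptiness, upper bounds $|S_k|\leq n-1$ from disjointness plus nonemptiness of the other set, and the componentwise constraint from disjointness), and injectivity is deduced from injectivity of $\bm\sigma$ exactly as in the paper. No gaps.
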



\begin{proof}
We prove \emph{1)} by showing first that $\Sigma(\Tc) \subseteq \Bc$, and then $\Bc \subseteq \Sigma(\Tc)$. Any $(S_1,S_2) \in \Tc$ satisfies $|S_1| > 0$, $|S_2| > 0$, $|S_1 \cap S_2| = 0$ as per \eqref{eq:Tdef}. Observe that
\begin{align*}
	|S_1| > 0 &\implies \bm 1^T \bsig(S_1) \geq 1, \\
	|S_2| > 0 &\implies \bm 1^T \bsig(S_2) \geq 1.
\end{align*}
Because $S_1,S_2 \subset \V$ and $|S_1 \cap S_2| = 0$, then $|S_1| < n$. Otherwise if $|S_1| = n$ then either $|S_2| = 0$ or $|S_1 \cap S_2| \neq 0$, which both contradict the definition of $\Tc$. Therefore $|S_1| < n$, and by similar arguments $|S_2| < n$. Observe that
\begin{align*}
	|S_1| < n &\implies \bm 1^T \bsig(S_1) \leq n- 1, \\
	|S_2| < n &\implies \bm 1^T \bsig(S_2) \leq n-1.
\end{align*}
Finally, $|S_1 \cap S_2| = 0$ implies that $j \in S_1 \implies j \notin S_2$ and $j \in S_2 \implies j \notin S_1$ $\forall j \in \{1,\ldots,n\}$. Therefore $\sigma_j(S_1) = 1 \implies \sigma_j(S_2) = 0$ and $\sigma_j(S_2) = 1 \implies \sigma_j(S_1) = 0$.
This implies that
\begin{align*}
	|S_1 \cap S_2| = 0 \implies \bm \sigma(S_1) + \bm \sigma(S_2) \gleq \bm 1.
\end{align*}
Therefore for all $(S_1,S_2) \in \Tc$,
$(\bm \sigma(S_1), \bm \sigma(S_2)) = $ $\Sigma(S_1,S_2)$ satisfies the constraints of the set on the RHS of \eqref{eq:Bset}. This implies that $\Sigma(\Tc) \subseteq \Bc$.

Next, we show $\Bc \subseteq \Sigma(\Tc)$ by showing that for all $(\bm b^1, \bm b^2) \in \Bc$,
there exists an $(S_1,S_2) \in \Tc$ such that $(\bm b^1, \bm b^2) = \Sigma(S_1,S_2)$. Choose any $(\bm b^1, \bm b^2) \in \Bc$ and define sets $(S_1,S_2)$ as follows
for $j \in \{1,\ldots,n\}$:
\begin{align}
	b^1_j = 1 &\implies j \in S_1, &b^2_j = 1 &\implies j \in S_2, \nonumber \\
	b^1_j = 0 &\implies j \notin S_1, &b^2_j = 0 &\implies j \notin S_2. \label{eq:Medusa}
\end{align}
For the considered sets $(S_1,S_2)$, $1 \leq \bm 1^T \bm b^1$ implies $ |S_1| > 0$ and $1 \leq \bm 1^T \bm b^2$ implies $ |S_2| > 0$. In addition since $\bm b^1 + \bm b^2 \gleq \bm 1$, we have $b^1_j = 1 \implies b^2_j = 0$ and $b^2_j = 1 \implies b^1_j = 0$.
By our choice of $S_1$ and $S_2$, we have $b_j^1 = 1 \implies j \in S_1$, and from previous arguments $b_j^1 = 1 \implies b_j^2 = 0 \implies j \notin S_2$. Similar reasoning can be used to show that $b_j^2 = 1 \implies j \notin S_1$. 
These arguments imply that $|S_1 \cap S_2| = 0$. 
Consequently, $(S_1,S_2)$ satisfies all the constraints of $\Tc$ and is therefore an element of $\Tc$. Clearly, by \eqref{eq:Medusa} we have $\Sigma(S_1,S_2) = (\bm b^1,\bm b^2)$, which shows that there exists an $(S_1,S_2) \in \Tc$ such that $(\bm b^1, \bm b^2) = \Sigma(S_1,S_2)$. Since this holds for all $(\bm b^1, \bm b^2) \in \Bc$, this implies $\Bc \subseteq \Sigma(\Tc)$. Therefore $\Sigma(\Tc) = \Bc$.

We now prove \emph{2)}. Since $\Sigma(\Tc) = \Bc$, the function $\Sigma : \Tc \rarr \Bc$ is surjective.
To show that it is injective, consider any $\Sigma(S_1,S_2) \in \Bc$ and $\Sigma(\bar{S}_1, \bar{S}_2) \in \Bc$ such that $\Sigma(S_1,S_2) = \Sigma(\bar{S}_1, \bar{S}_2)$. This implies $(\bsig(S_1),\bsig(S_2)) = (\bsig(\bar{S}_1), \bsig(\bar{S}_2))$. Note that $(\bsig(S_1),\bsig(S_2)) = (\bsig(\bar{S}_1), \bsig(\bar{S}_2))$ if and only if $\bsig(S_1) = \bsig(\bar{S}_1)$ and $\bsig(S_2) = \bsig(\bar{S}_2)$. Since the indicator function $\bm \sigma : \Pc(\V) \rarr \{0,1\}^n$ is itself injective, this implies $S_1 = \bar{S}_1$ and $S_2 = \bar{S}_2$, which implies $(S_1,S_2) = (\bar{S}_1,\bar{S}_2)$. Therefore $\Sigma : \Tc \rarr \Bc$ is injective.    
\end{proof}

Using Lemma 4 allows us to present the following mixed integer linear program which solves for $\rmax(\D)$:

\begin{theorem}
\label{thm:rrobust}
    Let $\D= (\V,\E)$ be an arbitrary nonempty, nontrivial, simple digraph and let $L$ be the Laplacian matrix of $\D$. The maximum $r$-robustness of $\D$, denoted $\rmax(\D)$, is obtained by solving the following minimization problem:
    \begin{align}
            \rmax(\D) =& \hspace{.5em} \underset{\bm b^1,\bm b^2}{\min}
            & & \max \pth{ \max_i \pth{L_i \bm b^1}, \max_j \pth{L_j \bm b^2} } \nonumber \\
            & \hspace{-.5em} \text{\textup{subject to}}
            & & \bm b^1 + \bm b^2 \gleq \bm 1 \nonumber \\
            & & & 1 \leq \bm 1^T \bm b^1 \leq (n-1) \nonumber \\
            & & & 1 \leq \bm 1^T \bm b^2 \leq (n-1) \label{eq:finalprobS0} \nonumber \\
            & & & \bm b^1, \bm b^2 \in \{0,1\}^n.
    \end{align}
\end{theorem}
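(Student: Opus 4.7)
The plan is to assemble the theorem by chaining the prior lemmas and then performing a change of variables that converts the set-valued optimization into a binary-vector optimization of MILP form. First, I would invoke Lemma \ref{lem:rrobalt} together with Remark \ref{rmk:implicit} to write
\begin{equation*}
    \rmax(\D) = \underset{(S_1,S_2) \in \Tc}{\min} \max\pth{\Rc(S_1),\Rc(S_2)}.
\end{equation*}
Next, Lemma \ref{lem:maxequal} (which itself rests on Lemma \ref{lem:Ljsig}) rewrites the inner objective in terms of the Laplacian rows acting on the indicator vectors, giving exactly equation \eqref{eq:rrobalt5000}.

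The second step is a change of variables from set pairs $(S_1,S_2) \in \Tc$ to binary vector pairs $(\bm b^1, \bm b^2) \in \Bc$ under the map $\Sigma$ defined in \eqref{eq:Sigma}. The crucial observation is that the objective in \eqref{eq:rrobalt5000} depends on $(S_1,S_2)$ only through the pair $(\bm \sigma(S_1), \bm \sigma(S_2)) = \Sigma(S_1,S_2)$. By Lemma \ref{lem:bijec}, $\Sigma : \Tc \rarr \Bc$ is a bijection, so every feasible $(S_1,S_2) \in \Tc$ corresponds to exactly one feasible $(\bm b^1,\bm b^2) \in \Bc$ with the same objective value, and conversely. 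Hence the minimum over $\Tc$ with the $\bm \sigma(\cdot)$ expressions equals the minimum over $\Bc$ with $\bm b^1, \bm b^2$ substituted directly.

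The last step is to verify that the explicit defining constraints of $\Bc$ in \eqref{eq:Bset}, namely $\bm b^1 + \bm b^2 \gleq \bm 1$, $1 \leq \bm 1^T \bm b^1 \leq n-1$, $1 \leq \bm 1^T \bm b^2 \leq n-1$, and $\bm b^1, \bm b^2 \in \{0,1\}^n$, coincide exactly with the constraints listed in the theorem statement. Since they match term-for-term, the optimization over $\Bc$ is identical to the MILP in the theorem, which yields $\rmax(\D)$ as claimed.

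The main obstacle is being careful with the change of variables: one must argue that because the objective is a function of $(\bm \sigma(S_1), \bm \sigma(S_2))$ alone, the bijectivity of $\Sigma$ (not merely surjectivity) is what guarantees that the minimum value is preserved when the domain is replaced by $\Bc$. Once this is spelled out, the rest of the argument is a direct substitution using the already-established lemmas, and no new combinatorial analysis is required.
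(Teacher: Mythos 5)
Your proposal is correct and follows essentially the same route as the paper: chain Lemma \ref{lem:rrobalt}, Remark \ref{rmk:implicit}, and Lemma \ref{lem:maxequal} to obtain the minimization over $\Tc$, then use the bijection $\Sigma:\Tc\rarr\Bc$ from Lemma \ref{lem:bijec} to replace set variables with binary vectors and make the constraints of $\Bc$ explicit. The only nuance is that since the objective factors through $\Sigma$, surjectivity of $\Sigma$ onto $\Bc$ already suffices to preserve the minimum value (injectivity is not strictly needed), but invoking the full bijection as you and the paper both do is certainly valid.
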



\begin{proof}
From Lemmas \ref{lem:rrobalt} and \ref{lem:maxequal}  we have
\begin{align}
            \rmax(\D) = \nonumber \\
             \underset{S_1,S_2 \in \Pc(\V)}{\min}
            & & \max \pth{ \max_i \pth{L_i \bm \sigma(S_1)}, \max_j \pth{L_j \bm \sigma(S_2)} } \nonumber \\
             \text{subject to}
            & & |S_1| > 0,\ |S_2| > 0,\ |S_1 \cap S_2| = 0, \nonumber
    \end{align}
for $i,j \in \{1,\ldots,n\}$. As per Remark \ref{rmk:implicit}, the definition of $\mathcal{T}$ can be used to make the constraints implicit:
\begin{align}
        \label{eq:rrobT}
            &\rmax(\D) = \nonumber \\
            &\underset{(S_1,S_2) \in \mathcal{T}}{\min}
             \max \pth{ \max_i \pth{L_i \bm \sigma(S_1)}, \max_j \pth{L_j \bm \sigma(S_2)} },           
\end{align}
for $i,j \in \{1,\ldots,n\}$. Since $\Sigma: \mathcal{T} \rarr \Bc$ is a bijection by Lemma \ref{lem:bijec}, \eqref{eq:rrobT} is equivalent to
\begin{align}
    \label{eq:implicitB}
            \rmax(\D) =& \underset{(\bm b^1, \bm b^2) \in \Bc}{\min}
            & & \max \pth{ \max_i \pth{L_i \bm b^1}, \max_j \pth{L_j \bm b^2} } \nonumber \\
            & & & i,j \in \{1,\ldots,n\}.
\end{align}
Making the constraints of \eqref{eq:implicitB} explicit yields \eqref{eq:finalprobS0}.    
\end{proof}

This minimization problem can actually be reformulated to an mixed integer linear programming problem where the objective and all constraint functions are affine except for the integer constraint. This is shown in the following corollary:

\begin{cor}
\label{cor:equivgen}
Let $\D = (\V,\E)$ be an arbitrary nonempty, nontrivial, simple digraph, and let $L$ be the Laplacian matrix of $\D$. The maximum $r$-robustness of $\D$, denoted $\rmax(\D)$, is obtained by solving the following mixed integer linear program:
\begin{align}
            \rmax(\D)=
            \hspace{.5em} \underset{t, \bm b}{\min \hspace{.5em}}
            &\ t \nonumber\\
             \hspace{.5em} \text{\textup{subject to \hspace{.5em}}}
             & t \in \R,\ \bm b \in \Z^{2n} \nonumber \\
             &0 \leq t \nonumber\\
             &\bmx{L & \bm 0 \\ \bm 0 & L} \bm b \gleq t \bmx{\bm 1 \\ \bm 1} \nonumber\\
              &\bm 0 \gleq \bm b \gleq \bm 1 \nonumber\\
              &\bmx{I_{n\times x} & I_{n \times n}}\bm b \gleq \bm 1 \nonumber\\
              &1 \leq  \bmx{\bm 1^T & \bm 0} \bm b \leq n-1 \nonumber\\
              &1 \leq \bmx{\bm 0 & \bm 1^T} \bm b \leq n-1 \label{eq:probaffine}
\end{align}
\end{cor}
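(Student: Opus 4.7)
The plan is to show that the MILP in \eqref{eq:probaffine} is just a reformulation of the problem in Theorem \ref{thm:rrobust} obtained by (i) applying the standard epigraph trick to linearize the nested $\max$ in the objective, and (ii) stacking $\bm b^1,\bm b^2$ into a single decision vector $\bm b=\bmx{\bm b^1\\ \bm b^2}\in\Z^{2n}$ so all constraints can be written as affine (in)equalities in $\bm b$.

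First I would rewrite the objective of Theorem \ref{thm:rrobust} in epigraph form: introducing a scalar $t\in\R$, the inner problem
$\min \max\pth{\max_i L_i\bm b^1,\max_j L_j\bm b^2}$
is equivalent to $\min t$ subject to $L_i\bm b^1\leq t$ for all $i$ and $L_j\bm b^2\leq t$ for all $j$. Stacking these $2n$ scalar inequalities into block-matrix form gives exactly
$\bmx{L & \bm 0\\ \bm 0 & L}\bm b\gleq t\bmx{\bm 1\\ \bm 1}$,
which matches the corresponding constraint in \eqref{eq:probaffine}. The nonnegativity constraint $0\leq t$ is redundant but valid: by Lemma \ref{lem:Ljsig}, $\Rc(S)\geq 0$ for every $S\in\Pc(\V)$, so $\rmax(\D)\geq 0$, and restricting the search to $t\geq 0$ does not alter the optimal value.

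Next I would translate each set-membership constraint $(\bm b^1,\bm b^2)\in\Bc$ from \eqref{eq:Bset} into affine constraints on the stacked vector $\bm b$. The bounds $1\leq \bm 1^T\bm b^1\leq n-1$ and $1\leq \bm 1^T\bm b^2\leq n-1$ become $1\leq\bmx{\bm 1^T & \bm 0}\bm b\leq n-1$ and $1\leq\bmx{\bm 0 & \bm 1^T}\bm b\leq n-1$, respectively. The disjointness condition $\bm b^1+\bm b^2\gleq\bm 1$ becomes $\bmx{I_{n\times n} & I_{n\times n}}\bm b\gleq\bm 1$.

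Finally I would handle the binary constraint. Rather than writing $\bm b\in\{0,1\}^{2n}$ directly, I relax it into an integer constraint together with box bounds: $\bm b\in\Z^{2n}$ and $\bm 0\gleq\bm b\gleq\bm 1$. The only integers satisfying $0\leq b_k\leq 1$ are $0$ and $1$, so this is equivalent to $\bm b\in\{0,1\}^{2n}$. The resulting program in \eqref{eq:probaffine} thus has the same feasible set and the same optimal value as the program in Theorem \ref{thm:rrobust}, with the single remaining nonlinearity being the integrality requirement $\bm b\in\Z^{2n}$; every other constraint and the objective are affine. The main bookkeeping obstacle is simply verifying that the epigraph split really produces the block $\bmx{L & \bm 0\\ \bm 0 & L}$ acting on $\bm b$ (rather than, say, an extra coupling), which is immediate once one writes out the stacked rows.
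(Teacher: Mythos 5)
Your proposal is correct and follows essentially the same route as the paper's own proof: stacking $\bm b^1,\bm b^2$ into $\bm b\in\Z^{2n}$, applying the epigraph reformulation of the min--max objective to obtain the block constraint $\bmxs{L & \bm 0 \\ \bm 0 & L}\bm b\gleq t\bmxs{\bm 1\\ \bm 1}$, and translating the membership $(\bm b^1,\bm b^2)\in\Bc$ and the binary restriction into affine plus integrality constraints. Your explicit justification that $0\leq t$ does not cut off the optimum (via $\Rc(S)\geq 0$ from Lemma \ref{lem:Ljsig}) is a small point the paper leaves implicit, but the argument is otherwise identical.
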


\begin{proof}
    The variables $\bm b^1$ and $\bm b^2$ from \eqref{eq:finalprobS0} are combined into the variable $\bm b \in \Z^{2n}$ in \eqref{eq:probaffine}; i.e. $\bm b =\bmxs{(\bm b^1)^T & (\bm b^2)^T}^T $. The first and fourth constraints of \eqref{eq:probaffine} restrict $\bm b \in \{0,1 \}^{2n}$. Next, it can be demonstrated \cite[Chapter 4]{boyd2004convex} that the formulation ${\min_{\bm x}} \max_i (x_i)$ is equivalent to
    \begin{equation*}
        \begin{aligned}
        & \underset{t, \bm x}{\min}
        & & t \\
        & \text{subject to}
        & &0 \leq t,\ \bm x \gleq t \bm 1. \\
        \end{aligned}
    \end{equation*}
    Reformulating the objective of the RHS of \eqref{eq:finalprobS0} in this way yields the objective, and second and third constraints of \eqref{eq:probaffine}:
    \begin{equation}
        \begin{aligned}
            & \underset{t, \bm b}{\min}
            & & t \\
            & \text{subject to}
            &  &0 \leq t \\
            & & &\bmx{L & \bm 0 \\ \bm 0 & L} \bm b \gleq t \bmx{\bm 1 \\ \bm 1}. \\
        \end{aligned}
    \end{equation}
    The fifth, sixth, and seventh constraints of \eqref{eq:probaffine} restrict $(\bm b^1, \bm b^2) \in \Bc$ and are simply a reformulation of the first three constraints in \eqref{eq:finalprobS0}.
\end{proof}

%
%




\subsection{Discussion}



Since $r$-robustness is equivalent to $(r,1)$-robustness,\footnote{See section VII-B of \cite{leblanc2013resilient}.} the solution to the optimization problem in Theorem \ref{thm:rrobust} determines the maximum $r$ for which the graph is $(r,1)$-robust. As per \cite{leblanc2013resilient,leblanc2013algorithms}, in general the parameter $r$ has higher precedence than $s$ when ordering a set of graphs by robustness. In addition, $(r'+s'-1)$-robustness implies $(r',s')$-robustness \cite{leblanc2012thesis},\footnote{E.g. $(2F+1)$-robustness for $F \in \Z_+$ implies $(F+1,F+1)$-robustness.} and so a certain degree of $(r,s)$-robustness can be inferred from knowing the maximum value of $r$.

When comparing the optimization problem in Theorem \ref{thm:rrobust} with the algorithms in \cite{leblanc2013algorithms}, it is important to note that those in \cite{leblanc2013algorithms} determine both of the parameters $r$ and $s$ for which graphs are $(r,s)$-robust. Since the method in Theorem \ref{thm:rrobust} only determines the largest $r$ for which a digraph is $(r,1)$-robust (with $s$ fixed at 1), it cannot be directly compared to Algorithm 3.2, $DetermineRobustness(\A(\D))$ in \cite{leblanc2013algorithms}. 
However, by replacing each initialization $s \larr n$ in $DetermineRobustness$ with the initialization $s \larr 1$, a modified algorithm is obtained which only determines $(r,1)$-robustness and can be directly compared with the method in Theorem \ref{thm:rrobust}. This modified algorithm is presented as Algorithm \ref{alg:modified} and serves as the benchmark against which we compare the performance of the MILP formulation presented in this paper (see Section \ref{sec:simulations}).
In addition, the initialization condition $r \larr \min(\delta^{\text{in}}(\D),\ceil{n/2})$ in $DetermineRobustness$ incorrectly classifies some rooted outbranchings as 0-robust (those with in-degree of the root being 0), when they are actually 1-robust (see Lemma 7 of \cite{leblanc2013resilient}). We have revised the initialization of $r$ accordingly.
The reader is referred to \cite{leblanc2013algorithms} for the definition of the function R\textsc{obustholds}$(A,S_1,S_2,r,s)$. In short, the function returns the boolean \textbf{true} if the number of $r$-reachable nodes from $S_1$ and $S_2$ is at least $s$, and \textbf{false} otherwise.

\begin{algorithm}
\caption{\small{Modified version of D\textsc{etermineRobustness}}}\label{alg:modified}
\begin{algorithmic}[1]
\State $r \leftarrow \min\pth{\max\pth{\delta^{\text{in}}(\D),1},\ceil{\frac{n}{2}}}$
\State $s \leftarrow 1$ \Comment{\emph{(Different than Alg. 3.2 in \cite{leblanc2013algorithms})}}
\State \textbf{comment:} $\delta^{\text{in}}(\D)$ is the min. in-degree of nodes in $\D$
\ForAll{$k \larr 2$ to $n$}
    \State \hspace{-1.9em} \textbf{comment:} $\mathcal{K}_k$ is the set of $C(n,k)$ unique subsets of $\V$
    \ForAll {$K_i \in \mathcal{K}_k\ (i = 1,2,\ldots,C(n,k))$}
    \ForAll {$P_j \in \mathcal{P}_{K_i}\ (j = 1,2,\ldots,2^{k-1}-1)$}
    \State \hspace{-2.2em}  \begin{tabular}{l l} \textbf{comment:} &$\mathcal{P}_{K_i}$ is set of partitions of $K_i$ into $S_1$ \\
    &and $S_2$ \end{tabular}
    \State ${isRSRobust \larr \text{R\textsc{obustholds}}}(A,S_1,S_2,r,s)$
    \If{($isRSRobust == $ \textbf{false}) \textbf{and} $s > 0$}
        \State $s \larr s -1$
    \EndIf
    \While{$isRRobust == $ \textbf{false} \textbf{and} $(r > 0)$}
        \While{$isRSRobust == $ \textbf{false} \textbf{and} $(s > 0)$}
            \State $isRRobust$
            \State \hspace{2em}$\larr \text{R\textsc{obustholds}}(A,S_1,S_2,r,s)$
            \If{\textbf{not} $isRSRobust$}
                \State $s \larr s-1$
            \EndIf
        \EndWhile
        
        \If{$isRSRobust == $ \textbf{false}}
            \State $r \larr r-1$
            \State $s \larr 1$  \Comment{\emph{(Different than Alg. 3.2 in \cite{leblanc2013algorithms})}}
        \EndIf
    \EndWhile
    
    \If{$r == 0$}
        \State \textbf{return} $r$
    \EndIf 
    \EndFor
    \State \textbf{end for}
    \EndFor
    \State \textbf{end for}
\EndFor
\State \textbf{end for}
\end{algorithmic}
\end{algorithm}

There are only two ways Algorithm \ref{alg:modified} will terminate: either the algorithm finds an $S_1 $ and $S_2 $ pair which are both 0-reachable, or the algorithm checks all possible unique pairs of subsets. Any subsets $S_1,S_2$ found such that $(S_1,S_2) \in \mathcal{T}$ and $\max(\Rc(S_1),\Rc(S_2)) < \beta$ for $\beta \in \Z_+$ is a certificate that the graph is \emph{not} $\beta$-robust \cite{zhang2015notion}; hence $r_{\max}(\D) < \beta$. It is only possible for Algorithm \ref{alg:modified} to tighten the \emph{upper bound} on $r_{\max}(\D)$ unless all pairs of relevant subsets are checked.



On the other hand, since the optimal value of Corollary \ref{cor:equivgen} is equal to $r_{\max}(\D)$ it is possible for a \emph{lower} bound on the robustness of $\D$ to be tightened over time by using a branch-and-bound (B\&B) algorithm \cite{wolsey2007mixed,vanderbei2015linear}. A lower bound on $r_{\max}(\D)$ is often more useful than an upper one, since $r_{\max}(\D) \geq \gamma$ implies that $\D$ is $r$-robust for all $0 \leq r \leq \gamma$.
The crucial advantage of B\&B algorithms is that both a global upper bound \emph{and lower bound} on the objective value are calculated and iteratively tightened as successive convex relaxations of the optimization problem are solved. When the gap between these bounds becomes zero, optimality is obtained. However, the search can also be terminated if the lower bound on the objective reaches a sufficiently high value. In context of robustness determination,
this therefore introduces the possibility of calculating approximate lower bounds on $r_{\max}(\D)$ for arbitrary nonempty, nontrivial, simple digraphs (and undirected graphs) without needing to fully solve for $r_{\max}(\D)$. A more detailed examination of the convergence rate for this lower bound is left for future work.

\section{Simulations}
\label{sec:simulations}

Simulations were conducted to compare the computation time of our formulation against Algorithm \ref{alg:modified}, the modified version of $DetermineRobustness$ proposed in \cite{leblanc2013algorithms}. Computations were performed in MATLAB 2018a on a desktop computer with an Intel Core i7-6700 CPU (3.40GHz) capable of handling 8 threads, and with 31GB of RAM.
The simulations tested the time required for the algorithms to calculate the \emph{exact} maximum $r$-robustness for various types of graphs. The algorithms tested were Algorithm \ref{alg:modified}, which is a modification of $DetermineRobustness$ in \cite{leblanc2013algorithms}, and the proposed formulation in Corollary \ref{cor:equivgen} solved using MATLAB's \emph{intlinprog} function. Four classes of random graphs were tested: Erd\H{o}s-R\'enyi random undirected graphs, directed random graphs, $k$-in directed random graphs, and $k$-out directed random graphs \cite{bollobas2001random}.\footnote{$k$-in random graphs are constructed in the same manner as $k$-out random graphs, but with the edge directions reversed.} 
Various values of $n$ were selected ranging from $7$ to $15$, and for each value of $n$ the algorithms were tested on 100 graphs. Additionally, the proposed MILP formulation was tested on random digraphs with values of $n$ ranging from $18$ to $30$, where 100 graphs were tested for each value of $n$.  For Erd\H{o}s-R\'enyi graphs and random digraphs, simulations were performed for edge formation probabilities $p \in \{.3, .5, .8 \}$. For the $k$-in and $k$-out random digraphs, simulations were performed for $k \in \{3,4,5\}$. Due to space constraints, only graphs for values $p = .5$, $p=.8$, and $k = 4$ are shown.
In all trials for the $n$ values where both algorithms were tested, the maximum $r$-robustness of the MILP formulation found for the graph was \emph{exactly} equal to the maximum $r$-robustness returned by the exhaustive search method Algorithm 1. 
The computation time of both algorithms demonstrate an exponential trend. However, for $n \geq 8$ the average computation time for the MILP algorithm is clearly less than the average computation time for Algorithm \ref{alg:modified}. 



\begin{figure}
    \centering
    \includegraphics[width=\figsc\columnwidth]{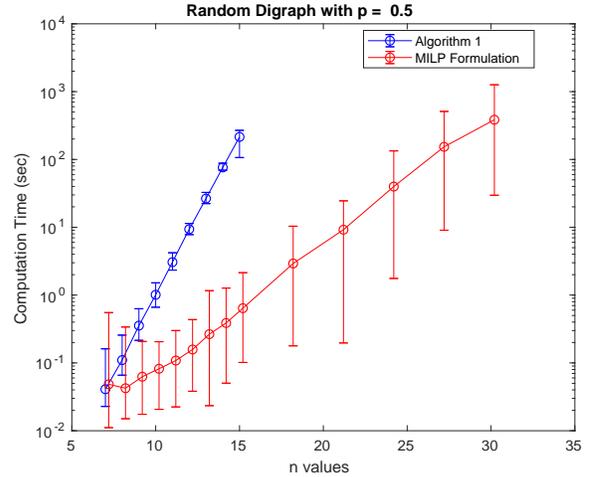}
    \caption{Computation time for determining maximum $r$-robustness of random directed graphs with parameter $p = 0.5$. Note the logarithmic scale on the y-axis. The vertical lines indicate the spread between max and min times, respectively, for each value of $n$.}
\end{figure}

\begin{figure}
    \centering
    \includegraphics[width=\figsc\columnwidth]{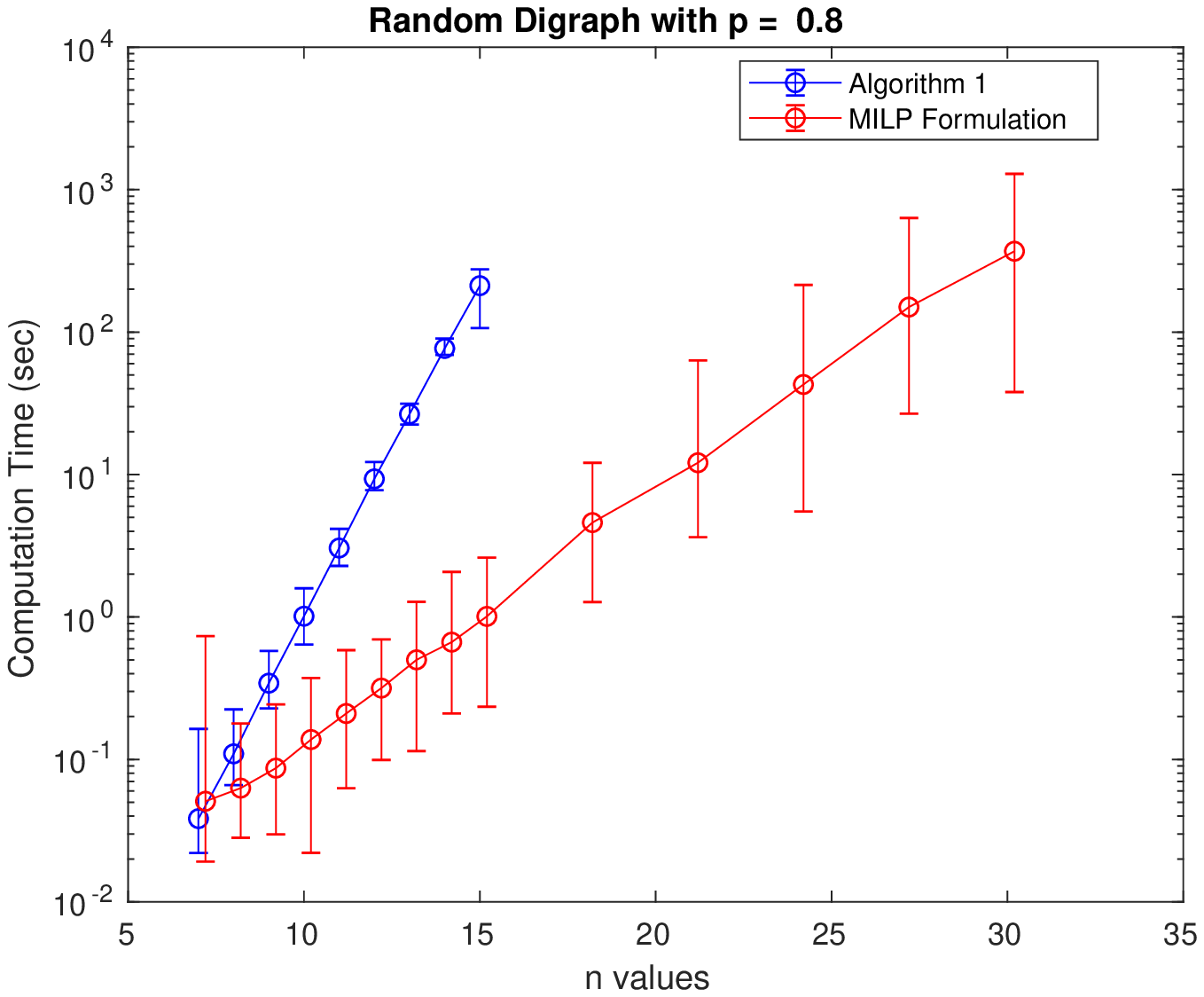}
    \caption{Computation time for determining maximum $r$-robustness of random directed graphs with parameter $p = 0.8$.}
\end{figure}

\begin{figure}
    \centering
    \includegraphics[width=\figsc\columnwidth]{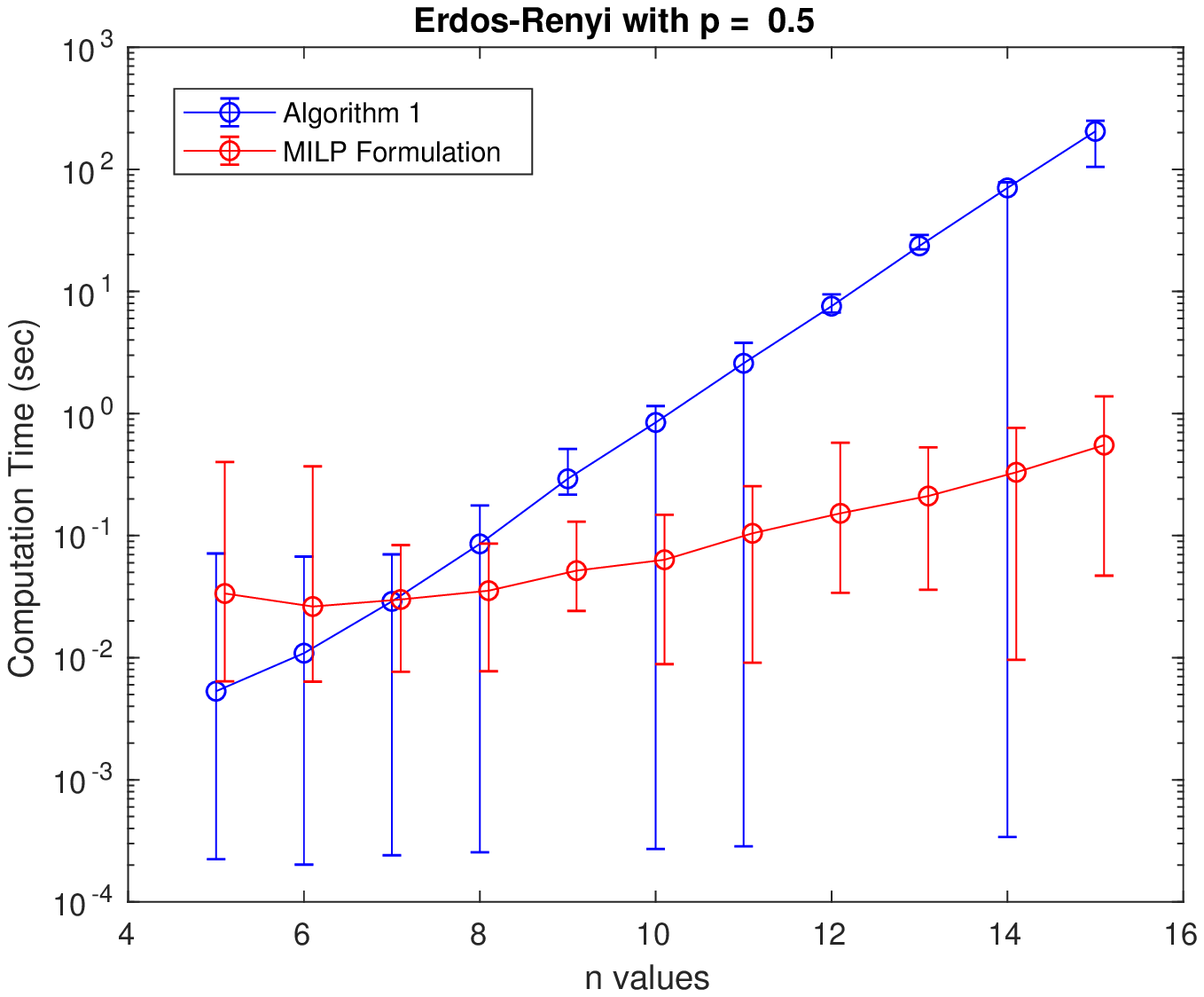}
    \caption{Computation time for determining maximum $r$-robustness of \Erdosrenyi random graphs with parameter $p = 0.5$.}
    \label{fig:Erdosp05}
\end{figure}

\begin{figure}
    \centering
    \includegraphics[width=\figsc\columnwidth]{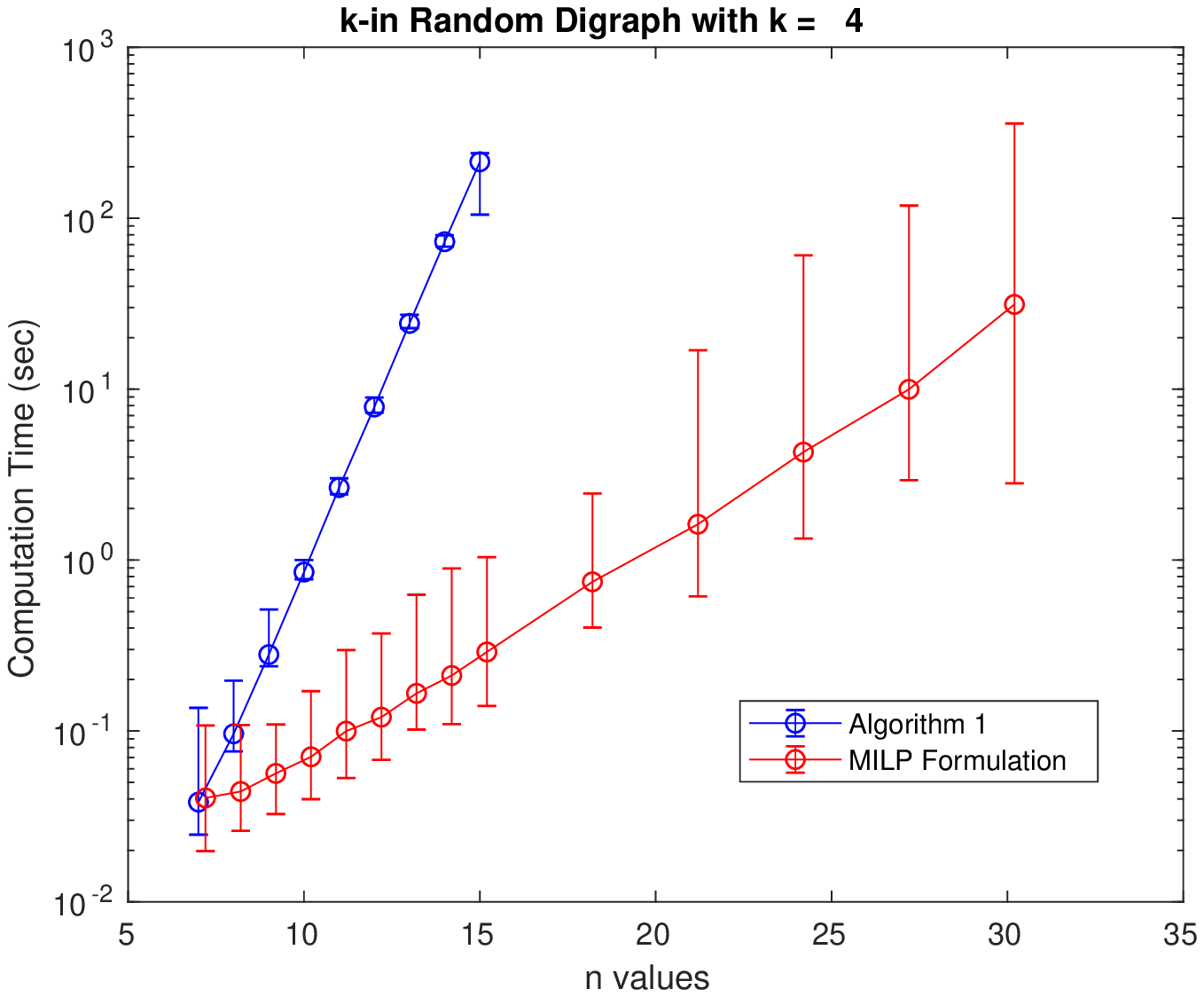}
    \caption{Computation time for determining maximum $r$-robustness of $k$-in random directed graphs with parameter $k = 0.5$. }
    \label{fig:kinrand_k04}
\end{figure}

\begin{figure}
    \centering
    \includegraphics[width=\figsc\columnwidth]{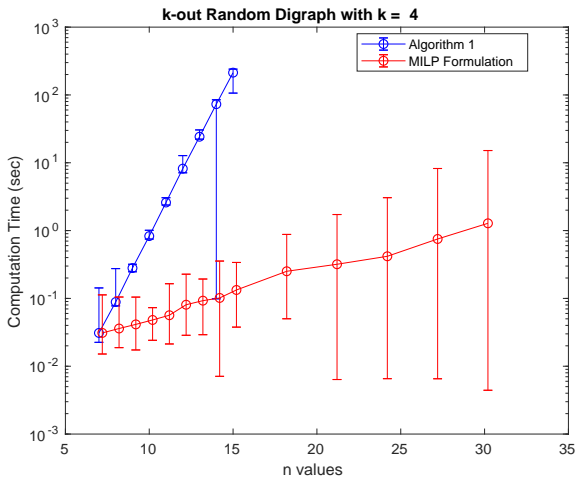}
    \caption{Computation time for determining maximum $r$-robustness of $k$-out random directed graphs with parameter $k = 0.5$. }
    \label{fig:koutrand_k04}
\end{figure}


\section{Conclusion}
\label{sec:conclusion}
This paper formulated the problem of determining the largest integer $r$ for which an arbitrary simple digraph is $r$-robust as a mixed integer linear program. This formulation was shown to have affine objective and constraints, except the integer constraint. Simulations suggest that this formulation demonstrates reduced computation time as compared to prior algorithms. Future work will include investigating techniques from the integer programming literature to further improve the efficiency of this method.

\bibliographystyle{IEEEtran}

\bibliography{ACC2019.bib}

\end{document}